\documentclass[conference,letterpaper]{IEEEtran}
\usepackage[letterpaper, left=0.8in, right=0.8in, bottom=0.8in, top=0.8in]{geometry}

\IEEEoverridecommandlockouts

\usepackage{graphics}
\usepackage{epsfig}
\usepackage{amsmath}
\usepackage{amssymb}
\usepackage{xcolor}
\usepackage{multirow}

\usepackage{amsthm}
\theoremstyle{definition}

\newtheorem{theorem}{\normalfont\bfseries Theorem}

\newtheorem{proposition}{\normalfont\bfseries Proposition}
\newtheorem{definition}{\normalfont\bfseries Definition}

\newtheorem{corollary}{\normalfont\bfseries Corollary}

\newtheorem{remark}{\normalfont\bfseries Remark}

\usepackage{hyperref}
\usepackage{graphicx}		
\usepackage{wrapfig}
\usepackage[format=plain,font=footnotesize,labelfont=bf,labelsep=period]{caption}
\usepackage[export]{adjustbox}
\usepackage{bm}

\usepackage{amsmath}
\usepackage{amssymb}
\usepackage{mathtools}
\usepackage{algorithm}
\usepackage{algpseudocode}

\usepackage{pdfsync}
\usepackage[normalem]{ulem}
\usepackage{paralist}	
\usepackage[space]{grffile} 

\usepackage{color}
\newenvironment{proof}{\textit{Proof.}}{\hfill$\square$}

\newcommand{\R}{\mathbb{R}}

\definecolor{darkblue}{RGB}{0,0,102}
\definecolor{lightblue}{RGB}{77,77,148}

\definecolor{gold}{RGB}{234, 170, 0}
\definecolor{metallic_gold}{RGB}{139, 111, 78}

\newcommand{\mb}[1]{\mathbf{ #1 }}

\DeclareMathOperator*{\argmin}{argmin}

\newcommand{\lmat }{\begin{bmatrix}}
\newcommand{\rmat}{\end{bmatrix}}



%

\newcommand{\etek}{CBF-compliant controller}




\title{\LARGE \bf End-to-End Imitation Learning with Safety Guarantees using Control Barrier Functions}

\author{Ryan K. Cosner, Yisong Yue, Aaron D. Ames
\thanks{This research is supported by BP and Aerovironment.}
\thanks{ R.K. Cosner and A.D. Ames are with the Department of Mechanical and Civil Engineering and Y. Yue is with the Department of Computing and Mathematical Sciences, California Institute of Technology, Pasadena, CA, 91125, USA,  {\tt\small \{rkcosner, yyue, ames\}@caltech.edu}. Y. Yue is also affiliated with Argo AI, Pittsgurgh, PA.  }%
}

\IEEEaftertitletext{\vspace{-1em}}

\begin{document}

\maketitle

\begin{abstract}
Imitation learning (IL) is a learning paradigm 
which can be used to synthesize controllers for complex systems that
mimic behavior demonstrated by an expert (user or control algorithm). Despite their popularity, IL methods generally lack guarantees of safety, which limits their utility for complex safety-critical systems. In this work we consider safety, formulated as set-invariance, and the associated formal guarantees endowed by Control Barrier Functions (CBFs). We develop conditions under which robustly-safe expert controllers, utilizing CBFs, can be used to learn end-to-end controllers (which we refer to as \textit{CBF-Compliant} controllers) that have safety guarantees. These guarantees are presented from the perspective of input-to-state safety (ISSf) which considers safety in the context of 
disturbances, wherein it is shown that IL using  robustly safe expert demonstrations results in ISSf with the 
disturbance directly related to properties of the learning problem.  We demonstrate these safety guarantees in simulated vision-based end-to-end control of an inverted pendulum and a car driving on a track. 

\end{abstract}


\section{Introduction}

The use of learning in conjunction with control has become  increasingly popular---especially in the context of autonomous systems and robotics where system properties, e.g. stability and safety, must generalize to the real world. 
Of particular interest in this paper, Imitation Learning (IL) trains a policy to mimic behavior demonstrated by an expert \cite{hussein2017imitation}. IL is a paradigm 
that has shown significant success in video-games \cite{ross2011reduction}, humanoid  robotics \cite{schaal1999imitation}, and autonomous vehicles \cite{codevilla2018end, pan2017agile}.  The goal of this paper is to extend the theoretic underpinnings of safety to IL applications.

Safety, framed as the forward-invariance of a \textit{safe set}, has become a dominant definition in control theory. Several methods have been introduced which provide guarantees of safety including model predictive control (MPC) \cite{wabersich_linear_2018}, optimal reachability-based methods \cite{bansal2017hamilton}, and control barrier functions (CBFs) \cite{ames_control_2017, borrmann2015control, nguyen2016exponential}. 
Control barrier function methods use a Lyapunov-like condition to guarantee safety and present advantages over other methods in that they provide guarantees for general 
continuous-time control-affine nonlinear systems and can be implemented efficiently as convex optimization programs. The robustness of CBF-based safety methods has also been  studied in the context of state uncertainty \cite{dean_guaranteeing_2020}, dynamics uncertainty \cite{kolathaya2018input}, and reduced-order models \cite{molnar_model-free_2021}.

With the goal of deploying 
machine learning to real-world systems where safety is critical, 
there has been a large body of recent work exploring the connection between machine learning and safety-critical control. 
For example, MPC controllers have incorporated learning to improve performance while satisfying state constraints in the presence of uncertainty \cite{rosolia2017autonomous} and model predictive safety certificates have been used to ensure the safety of learned controllers \cite{wabersich2021predictive}.  Learning has also been used in optimal control to ensure safety in the context of reinforcement learning \cite{fisac2019bridging}.  Finally, several CBF methods learn the uncertainty in either the full-order dynamics \cite{emam2021safe, castaneda2021pointwise} or the projection of the uncertainty to the CBF \cite{taylor_learning_nodate, csomay2021episodic} to better enforce safety. 

\begin{figure}[t]
    \centering
    \includegraphics[width=0.87\linewidth]{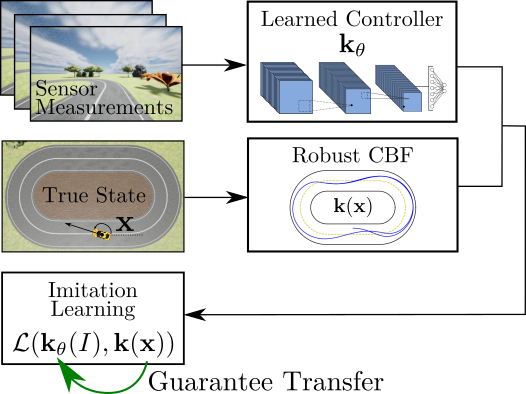}
    \caption{The robust control barrier function-based controller $\mb{k}$ is used as the expert controller in an imitation learning problem to generate the end-to-end controller $\mb{k}_\theta$. By ensuring that $\mb{k}$ has robust safety guarantees, we ensure that $\mb{k}_\theta$ is also guaranteed to keep the system safe.}
    \label{fig:high_level_fig}
    \vspace{-0.85cm}
\end{figure}



Specifically in the context of IL, safety of learned controllers has been studied from several perspectives. In the simplest setting, the controller can be disengaged whenever it is deemed unsafe by a human operator \cite{pan2017agile}. By defining safety as the deviation from the expert controller, several methods such as SafeDAgger \cite{brown2018risk} and EnsembleDAgger \cite{zhang2017query} can achieve this form of safety by returning control of the system to the expert whenever an uncertainty threshold is reached. This perspective on safety is different from set-invariance as small deviations from a safe controller may cause the system to leave the desired safe set. 
Alternatively, forward-invariance for systems with IL-based controllers for discrete-time systems has been studied in the context Lyapunov stability 
\cite{yin2021imitation} and MPC \cite{hertneck2018learning}. The success of IL-based controllers learned from CBF-based demonstrations has been shown in simulation without guarantees  \cite{yaghoubi2020training}.
Yet theoretical connections between IL and CBF-based expert controllers have not been established.

This paper presents an end-to-end IL method that endows the learned controller with formal guarantees of safety, as illustrated in Fig. \ref{fig:high_level_fig}.  
The central idea is that if the expert controller has robust CBF-based safety guarantees and the proper sampling scheme is employed then it is possible to obtain safety guarantees for the IL-synthesized controllers, we call these learned controllers \textbf{CBF-compliant}.
These results are framed in the context of robust safety, and specifically input-to-state safety (ISSf), wherein we established connections between the learning methods, sampling density, network smoothness, and learning accuracy.  This manifests in the main result of the paper, which gives a formal guarantee of safety (in an ISSf context) where properties of the learning problem directly affect the corresponding robust safe set.  This result is demonstrated in two simulations: an inverted pendulum and a vehicle driving on a track.  In both cases, safe vision-based control is demonstrated using our end-to-end safe imitation learning methodology, where end-to-end refers to direct perception-to-control.


\section{Problem Setup}
\label{sec:problem}

Throughout the rest of this work we consider nonlinear systems with a control-affine structure:
\begin{align}\label{eq:openloop}
    \dot{\mb{x}} & = \mb{f}(\mb{x}) + \mb{g}(\mb{x}) \mb{u},
\end{align}
\noindent where $\mb{x} \in \R^n$ and $\mb{u} \in \R^m $ are the states and inputs respectively and also where the drift dynamics $\mb{f}: \R^n \to \R^n$ and input matrix $\mb{g}: \R^n \to \R^{n\times m} $ are locally Lipschitz continuous on $\R^n$. Given a continuously differentiable state-feedback controller $\mb{k}: \R^n \to \R^m $ that has access to full state information, the closed loop dynamics are given by:
\begin{align}\label{eq:closed_loop}
    \dot{\mb{x}} = \mb{f}_\textrm{cl}(\mb{x}) \triangleq \mb{f}(\mb{x}) + \mb{g}(\mb{x})\mb{k}(\mb{x}).  
\end{align}

The assumption of continuous differentiability of $\mb{f},$ $ \mb{g}, $ and $\mb{k}$ implies that $\mb{f}_\textrm{cl}$ is locally Lipschitz continuous. Thus for any initial condition $\mb{x}_0 \triangleq \mb{x}(0) \in \R^n $ there exists a time interval $I(\mb{x}_0) = [0, t_\textrm{max}) $ such that $\mb{x}(t)$ is the unique solution to \eqref{eq:closed_loop} on $I(\mb{x}_0)$ \cite{perko2013differential}. We further assume that the closed loop system \eqref{eq:closed_loop} is \textit{forward complete}, i.e.  $T_\textrm{max} = \infty $. 

In the remainder of this section we provide preliminaries on imitation learning and control barrier functions.

\subsection{Imitation Learning}

Imitation learning (IL) is a common learning framework in which a mapping between observations and actions is trained using expert demonstrations. Common methodologies in IL include behavioral cloning (a form of supervised learning), DAgger \cite{ross2011reduction} which repeatedly collects additional state-action pairs, and inverse reinforcement learning (IRL) \cite{ziebart2008maximum} which learns a cost function such that the action or action sequence with minimal cost agrees with the expert demonstrations. 
We will present our method in the context of behavioral cloning, but note that our method is not specific to this form of IL and can be generalized to provide safety guarantees for DAgger and IRL since the theory developed in this work depends on the learned controller itself and not the learning framework used to produce it.

For end-to-end IL we model sensor measurements as: 
\begin{align}
    \mb{y} = \mb{c}(\mb{x}) \label{eq:camera}
\end{align}
\noindent where $\mb{y} \in \R^k $ represents the system observations and $\mb{c}: \R^n \to \R^k $ represents the system's sensors which we assume to be locally Lipschitz as in \cite{dean_guaranteeing_2020}. In the context of computer vision $\mb{y}$ may be a vector representation of image data and $\mb{c} $ may be the camera sensor which maps from state to image. 

In order to train an end-to-end controller, we collect an exogenous dataset of observation-input pairs using the expert controller $\mb{k}:\R^n \to \R^m$: 
\begin{align}
    \mathcal{D} =  \{D_i \}^N_{i=1},  &&  D_i = (\mb{c}(\mb{x}_i), \mb{k}(\mb{x}_i) ) \in \R^k \times \R^m 
\end{align}
for $N \in \mathbb{N}$ samples\footnote{As in \cite{pan2017agile}, we assume the expert controller has access to the true state. }.  Given a nonlinear function class $\mathcal{H}: \R^k \to \R^m $ and a loss function $\mathcal{L}: \R^m\times\R^m \to \R$, the learning problem can be expressed as optimizing the parameters $\theta$ of the function $\mb{k}_\theta \in \mathcal{H}$ via empirical risk minimization: 
\begin{align}
    \min_{\mb{k}_\theta \in \mathcal{H}} \frac{1}{N} \sum_{i=1}^N \mathcal{L}\left( \mb{k}_\theta(\mb{c}(\mb{x}_i)), \mb{k}(\mb{x}_i) \right) \label{eq:learning_opt_statement}. 
\end{align}

\begin{remark}
Behavioral cloning as in \eqref{eq:learning_opt_statement} suffers from compounding errors in the resulting trajectories \cite{ross2011reduction}. However, since our goal is  to transfer safety guarantees from the expert controller to the learned controller rather than to exactly mimic the expert behavior, we show that forward-invariance can be achieved despite compounding errors if the expert controller enforces robust forward-invariance as in Def. \ref{def:etek}.
\end{remark}

\subsection{Safety and Control Barrier Functions } \label{subsec:safety}
In this section we define safety as the forward-invariance of a \textit{safe set} $\mathcal{C} \subset \R^n $ and discuss control barrier functions (CBFs) as a tool for achieving safety for nonlinear systems. 
\begin{definition}[Forward Invariance and Safety]
A set $\mathcal{C}\subseteq \R^n $ is forward invariant if for every $\mb{x}_0 \in \mathcal{C}$ the solution to \eqref{eq:closed_loop}
 satisfies $\mb{x}(t) \in \mathcal{C} $ for all $ t \geq 0 $. The system \eqref{eq:closed_loop} is \textit{safe} with respect to the set $\mathcal{C}$ if $\mathcal{C}$ is forward invariant. 
 \end{definition}
 
Consider the set $\mathcal{C}$ defined as the 0-superlevel set of some continuously differentiable function $h:\R^n \to \R$: 
 \begin{align}
    \mathcal{C} &\triangleq \{ \mb{x} \in \R^n ~|~ h(\mb{x}) \geq 0\} \\
    \textrm{Int}(\mathcal{C}) &\triangleq\{ \mb{x} \in \R^n ~|~ h(\mb{x}) > 0\} \\
    \partial \mathcal{C} &\triangleq \{ \mb{x} \in \R^n ~|~ h(\mb{x}) = 0\} 
 \end{align}
 \noindent These functions $h$ are used in the following definition of CBFs:
 \begin{definition}[Control Barrier Function (CBF),  \cite{ames_control_2017}]
 Let $\mathcal{C}\subseteq \R^n $ be the 0-superlevel set of a continuously differentiable function $h: \R^n \to \R$ with 0 a regular value\footnote{$0$ is a regular value of $h: \R^n \to \R$ if $h(\mb{x}) = 0  \implies \frac{\partial h }{\partial \mb{x} }(\mb{x}) \neq 0 $}. The function $h$ is a \textit{Control Barrier Function (CBF)} for \eqref{eq:openloop} on $\mathcal{C}$ if there exists a locally Lipschitz extended class $\mathcal{K}$ infinity function\footnote{ 
 A continuous function $\alpha:\R_{\geq 0 } \to \R_{\geq 0}$ is \textit{class } $\mathcal{K}_\infty$  ($\alpha \in \mathcal{K}_\infty$) if $\alpha(0) = 0 $, $\alpha $ is strictly monotonically increasing, and $\lim_{c \to \infty } \alpha(c ) = \infty $. 
 We say that a continuous function $\alpha: \R \to \R $ is \textit{extended class } $\mathcal{K}_\infty^e$ ($\alpha \in \mathcal{K}_{\infty}^e $) if $\alpha(0) = 0$, $\alpha $ is strictly monotonically increasing, $\lim_{c \to \infty } \alpha (c) = \infty $ and $\lim_{c \to - \infty } \alpha(c) = - \infty $.} 
 $\alpha \in \mathcal{K}_{\infty}^e$ such that for all $\mb{x} \in \mathcal{C}$: 
 \begin{align}\label{eq:cbf_constraint}
     \sup_{\mb{u}\in\R^m} \overbrace{ \underbrace{\frac{\partial h}{\partial \mb{x}} (\mb{x}) \mb{f}(\mb{x})}_{L_\mb{f}h(\mb{x})} + \underbrace{\frac{\partial h}{\partial \mb{x}}(\mb{x}) \mb{g}(\mb{x})}_{L_\mb{g}h(\mb{x})} \mb{u}}^{\dot{h}(\mb{x}, \mb{u})} \geq - \alpha (h(\mb{x})). 
 \end{align}
 \end{definition}
\noindent This safety condition \eqref{eq:cbf_constraint} constrains the time derivative of $h$, preventing it from being negative if $h(\mb{x}(0)) =  0$ and thus  rendering the 0-superlevel set $\mathcal{C}$ forward invariant.

This mathematical guarantee of safety is formalized as:  
 \begin{theorem}[\cite{ames_control_2017}]
Given a set $\mathcal{C}\subseteq \R^n $ defined as the 0-superlevel set of a continuously differentiable function $h: \R^n \to \R$, if $h$ is a CBF for \eqref{eq:openloop} on $\mathcal{C}$, then any locally Lipschitz continuous controller $\mb{k}:\R^n \to \R^m $, such that $\mb{k}(\mb{x}) = \mb{u}$ satisfies \eqref{eq:cbf_constraint}, renders system \eqref{eq:closed_loop} safe with respect to $\mathcal{C}$. \label{thm:cbfs}
 \end{theorem}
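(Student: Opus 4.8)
The plan is to reduce forward invariance of $\mathcal{C}$ to a one-dimensional differential inequality along closed-loop trajectories and then invoke a comparison argument. First I would fix $\mb{x}_0\in\mathcal{C}$ and, since $\mb{k}$ is locally Lipschitz, let $\mb{x}(t)$ denote the unique (forward complete) solution of \eqref{eq:closed_loop}. Because $h$ is continuously differentiable and $t\mapsto\mb{x}(t)$ is $C^1$, the scalar signal $\lambda(t)\triangleq h(\mb{x}(t))$ is differentiable with
$$\dot\lambda(t) = L_{\mb{f}}h(\mb{x}(t)) + L_{\mb{g}}h(\mb{x}(t))\,\mb{k}(\mb{x}(t)).$$
By hypothesis $\mb{k}(\mb{x})$ satisfies \eqref{eq:cbf_constraint} for every $\mb{x}\in\mathcal{C}$, so whenever $\mb{x}(t)\in\mathcal{C}$ (equivalently $\lambda(t)\geq0$) we obtain the differential inequality $\dot\lambda(t)\geq-\alpha(\lambda(t))$.

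Next I would introduce the scalar comparison system $\dot y=-\alpha(y)$ with $y(0)=\lambda(0)=h(\mb{x}_0)\geq0$. Since $\alpha$ is locally Lipschitz and extended class $\mathcal{K}_\infty$, this ODE has a unique solution, and because $\alpha(0)=0$ the constant $y\equiv0$ is an equilibrium; uniqueness then prevents any trajectory from crossing it, so a nonnegative initial condition yields $y(t)\geq0$ for all $t\geq0$. The comparison lemma applied to $\dot\lambda\geq-\alpha(\lambda)$ and $\dot y=-\alpha(y)$ with $\lambda(0)\geq y(0)$ then gives $\lambda(t)\geq y(t)\geq0$, i.e. $h(\mb{x}(t))\geq0$ and hence $\mb{x}(t)\in\mathcal{C}$ for all $t$.

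The one genuine subtlety --- and the step I expect to require the most care --- is that the CBF inequality, and therefore the differential inequality on $\lambda$, is only assumed to hold on $\mathcal{C}$ rather than on all of $\R^n$, so the comparison bound cannot be invoked globally without justification. I would close this gap with a first-exit-time argument: define $\tau=\inf\{t\geq0:\lambda(t)<0\}$ and suppose $\tau<\infty$. On the closed interval $[0,\tau]$ continuity gives $\lambda\geq0$, so $\mb{x}(t)$ lies in the closed set $\mathcal{C}$ (including its boundary $\partial\mathcal{C}$, where $h=0$) and the inequality $\dot\lambda\geq-\alpha(\lambda)$ holds throughout; the comparison lemma then forces $\lambda(\tau)\geq y(\tau)\geq0$, contradicting that $\lambda$ becomes negative immediately after $\tau$. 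Hence $\tau=\infty$ and $\lambda(t)\geq0$ for all $t\geq0$. This establishes forward invariance of $\mathcal{C}$, which by the definition of safety completes the proof. The essential roles of the hypotheses are visible in this chain: $C^1$ regularity of $h$ makes $\lambda$ differentiable, local Lipschitzness of $\mb{k}$ guarantees a unique well-defined trajectory, and local Lipschitzness of $\alpha$ is exactly what rules out the comparison trajectory escaping below zero.
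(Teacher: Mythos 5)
The paper itself gives no proof of this theorem (it is imported from the cited reference), so I am judging your argument on its own terms. The comparison-lemma skeleton is the standard one and it does correctly handle initial conditions with $h(\mb{x}_0)>0$: there $y(t)>0$ for all $t$ by backward uniqueness through the equilibrium, while your exit-time argument forces $\lambda(\tau)=0$, a genuine contradiction. The gap is in the case you flagged as the ``one genuine subtlety,'' and your patch does not close it. The conclusion $\lambda(\tau)\geq y(\tau)\geq 0$ is \emph{not} in contradiction with ``$\lambda$ becomes negative immediately after $\tau$'': the function $\lambda(t)=-(t-\tau)^2$ has $\lambda(\tau)=0$ and goes negative immediately afterward, and it satisfies every hypothesis your scalar reduction retains, because the differential inequality $\dot\lambda\geq-\alpha(\lambda)$ is only required where $\lambda\geq 0$, i.e.\ only at the single instant $t=\tau$, where it reads $0\geq-\alpha(0)=0$. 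In particular, for $\mb{x}_0\in\partial\mathcal{C}$ you have $y\equiv 0$ and the comparison lemma returns exactly the tautology $\lambda(\tau)\geq 0$; nothing prevents the trajectory from leaving $\mathcal{C}$ an instant later. The scalar reduction simply loses too much information once the inequality is posed only on $\{h\geq 0\}$.

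A reliable symptom of the gap is that your proof never uses the regular-value hypothesis (footnote~2 of the CBF definition), yet the theorem is false without it: take $h(x)=-x^2$ on $\R$, so $\mathcal{C}=\{0\}$, with dynamics $\dot x=1$, $g\equiv 0$, and $\mb{k}\equiv 0$. Then $L_{\mb{f}}h(0)=L_{\mb{g}}h(0)=0$, so \eqref{eq:cbf_constraint} holds at every point of $\mathcal{C}$, but $x(t)=t$ exits $\mathcal{C}$ immediately. Any correct proof must therefore invoke $\nabla h(\mb{x})\neq 0$ on $\partial\mathcal{C}$. The standard way to finish is Nagumo's theorem (sub-tangentiality): on $\partial\mathcal{C}$ the hypothesis gives $\dot h(\mb{x},\mb{k}(\mb{x}))\geq-\alpha(0)=0$, and since $\nabla h(\mb{x})\neq 0$ there, the closed-loop vector field lies in the Bouligand tangent cone to the closed set $\mathcal{C}$ at every boundary point; together with uniqueness of solutions of \eqref{eq:closed_loop} this yields forward invariance. (This is exactly the tool the present paper invokes at the end of its proof of Theorem~2.) Alternatively, you could strengthen the hypothesis so that \eqref{eq:cbf_constraint} holds on an open set $D\supset\mathcal{C}$; then the differential inequality holds in a neighborhood of the boundary and your comparison argument can legitimately be run past the first touching time.
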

 
Beyond verifying the safety of closed-loop systems, CBFs can also be used as a tool for generating safe control inputs. One such controller that satisfies \eqref{eq:cbf_constraint}  is the CBF-QP:
\begin{align} \label{eq:CBF-QP} \tag{CBF-QP}
    \mb{k}_\textrm{cbf-qp} = \argmin_{\mb{u}\in \R^m } & \quad \frac{1}{2}\Vert \mb{u} - \mb{k}_\textrm{nom}(\mb{x})  \Vert^2\\
    & \quad \textrm{s.t. }  L_\mb{f}h(\mb{x}) + L_\mb{g}h(\mb{x}) \geq - \alpha (h(\mb{x})). \nonumber
\end{align}
The controller $\mb{k}_\mathrm{cbf-qp}: \R^n \to \R^m $ filters a nominal and potentially unsafe controller $\mb{k}_\textrm{nom}:\R^n \to \R^m$ and returns an input which satisfies the safety constraint \eqref{eq:cbf_constraint}. 

\section{Robust Safety}
\label{sec:robust}

In this section we discuss robust safety, which will allow us to capture the uncertainties generated by IL.  In particular, we begin by giving a summary of input-to-state safety (ISSf) followed by a discussion of the continuity properties of level sets of $h$ which, again, will allow us to reason about uncertainty as it is reflected in the safe set $\mathcal{C}$ given by $h$.

\subsection{Input-to-state Safety}

We begin by discussing robust safety and the Input-to-State Safety (ISSf) property \cite{kolathaya2018input} which extends CBF-based set-invariance to systems with disturbances. To do this we consider the system \eqref{eq:openloop} and introduce a matched, bounded, and potentially time-varying disturbance   $\mb{d}: \R_{\geq 0 } \to \R^m  $, 
\begin{align}
    \dot{\mb{x}}  = \mb{f}(\mb{x}) + \mb{g}(\mb{x})( \mb{k}(\mb{x}) + \mb{d}(t)) \label{eq:dyn_disturbed}.
\end{align}

Due to the disturbance, the safety guarantees established in Theorem \ref{thm:cbfs} no longer hold. To analyze the effect of the disturbance on safety we instead consider the expanded safe set $\mathcal{C}_\delta \supseteq \mathcal{C}$ for some $\delta \geq 0 $: 
\begin{align}
    \mathcal{C}_\delta &  \triangleq \{ \mb{x} \in \R^n ~|~ h(\mb{x}) \geq -\gamma (\delta)  \} \label{eq:issf_safe_set}\\
    \textrm{Int}(\mathcal{C}_\delta) & \triangleq  \{ \mb{x} \in \R^n ~|~ h (\mb{x} ) > - \gamma(\delta)  \} \\
    \partial \mathcal{C}_\delta    & \triangleq \{ \mb{x} \in \R^n ~|~ h (\mb{x}) = - \gamma(\delta) \}.
\end{align}
where $\gamma \in \mathcal{K}_\infty$. We note that the original safe set is recovered when $\delta = 0 $, i.e.  $\mathcal{C}_0 = \mathcal{C}$.

We can now define Input-to-State-Safety (ISSf) as the forward invariance of the set $\mathcal{C}_\delta$ in the presence of disturbances. 
\begin{definition} [Input-to-State Safety (ISSf) \cite{alan_safe_2022}]
Let $\mathcal{C}\subset \R^n $ be the 0-superlevel set of a continuously differentiable function $h: \R^n \to \R $. The system \eqref{eq:dyn_disturbed} is \textit{input-to-state safe} (ISSf) if there exists $\gamma \in \mathcal{K}_\infty$ and $\delta \geq 0 $ such that $\forall  \mb{d} $ where $\Vert \mb{d} \Vert_\infty \leq \delta  $, the set $\mathcal{C}_\delta$ defined by \eqref{eq:issf_safe_set} is safe. In this case we refer to the set $\mathcal{C}$ as an \textit{input-to-state safe set} (ISSf set). 
\end{definition}
\noindent ISSf is the safety analog to the more common Input-to-State Stability (ISS) property of Lyapunov stable systems \cite{sontag2008input}. 


\subsection{Continuity Properties}
Before proving the safety-transfer that occurs between the expert and learned controllers in \eqref{eq:learning_opt_statement}, we must first establish the upper semi-continuity (USC) of the level sets of $h$ which will allow us to reason about the expanded safe set $\mathcal{C}_\delta$.

To discuss non-zero level sets, we define the $c$-level set of a function $h$ using the preimage $h^{-1} : \R \rightsquigarrow \mathcal{P}(\R^n)$, 
\begin{align}
    h^{-1}(c) = \{ \mb{x} \in \R^n ~|~ h(\mb{x}) = c \} \label{eq:levelset}
\end{align}
where $c \in \R$ and $\mathcal{P}(\R^n)$ denotes the power set of $\R^n$. 

One definition used to describe the continuity properties of point-to-set maps is USC:  
\begin{definition}[Upper Semi-Continuity (USC) \cite{aubin2009set}]
A set valued function map $h^{-1}: \R \rightsquigarrow \mathcal{P}(\R^n) $ is \textit{upper semi-continuous} at $c \in \R $ if and only if for any $\epsilon >0$, there exists $\eta > 0 $ such that $ c' \in B_\eta (c) \implies h^{-1}(c') \subset h^{-1}(c) \oplus B_\epsilon   $. 
\end{definition}
\noindent where $B_\eta $ is the Euclidean ball of radius $\eta$. 

It was established in \cite[Proposition 6]{konda2019characterizing} that, under common assumptions for CBFs, the level sets $h^{-1}$  are USC.
\begin{proposition}[\cite{konda2019characterizing}]
Let $h^{-1}: \R \rightsquigarrow \mathcal{P}(\R^n) $  be the preimage \eqref{eq:levelset} representing the $c$-level set of some continuously differentiable function $h: \R^n \to \R$. If 0 is a regular value of $h$ and $\Lambda := \{ \mb{x} \in \R^n ~|~ -\delta \leq h(\mb{x}) \leq \delta \} $ is compact for all $\delta \geq 0 $, then $h^{-1} $ is upper semi-continuous at 0.   \label{prop:usc}
\end{proposition}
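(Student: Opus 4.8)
The plan is to argue by contradiction, with compactness of $\Lambda$ and continuity of $h$ doing all the work. Recall that USC at $0$ demands: for every $\epsilon > 0$ there is an $\eta > 0$ such that $|c'| < \eta$ forces $h^{-1}(c') \subset h^{-1}(0) \oplus B_\epsilon$. Suppose this fails. Then some $\epsilon_0 > 0$ admits no working $\eta$, so choosing $\eta = 1/k$ for each $k \in \mathbb{N}$ produces levels $c_k$ with $|c_k| < 1/k$ together with points $\mb{x}_k$ satisfying $h(\mb{x}_k) = c_k$ yet $\mb{x}_k \notin h^{-1}(0) \oplus B_{\epsilon_0}$. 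Writing the distance to the $0$-level set as $\mathrm{dist}(\mb{x}, h^{-1}(0)) \triangleq \inf_{\mb{z} \in h^{-1}(0)} \norm{\mb{x} - \mb{z}}$, the last condition reads $\mathrm{dist}(\mb{x}_k, h^{-1}(0)) \geq \epsilon_0$ for every $k$.

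Next I would trap the sequence inside one fixed compact set. Fixing any $\delta > 0$, for all $k$ large enough we have $|c_k| < 1/k \leq \delta$, so $\mb{x}_k \in \Lambda = \{\mb{x} \in \R^n : -\delta \leq h(\mb{x}) \leq \delta\}$, which is compact by hypothesis. Sequential compactness in $\R^n$ then yields a subsequence $\mb{x}_{k_j} \to \mb{x}^\star$ with $\mb{x}^\star \in \Lambda$. The contradiction follows from two continuity observations. First, continuity of $h$ gives $h(\mb{x}^\star) = \lim_j h(\mb{x}_{k_j}) = \lim_j c_{k_j} = 0$, so $\mb{x}^\star \in h^{-1}(0)$ and hence $\mathrm{dist}(\mb{x}^\star, h^{-1}(0)) = 0$. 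Second, the map $\mb{x} \mapsto \mathrm{dist}(\mb{x}, h^{-1}(0))$ is $1$-Lipschitz, so passing to the limit in $\mathrm{dist}(\mb{x}_{k_j}, h^{-1}(0)) \geq \epsilon_0$ gives $\mathrm{dist}(\mb{x}^\star, h^{-1}(0)) \geq \epsilon_0 > 0$. These two facts are incompatible, so the assumed failure is impossible and $h^{-1}$ is USC at $0$.

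The step I expect to be the main obstacle is confirming that the violating sequence is eventually confined to a single compact set: compactness of $\Lambda$ is precisely what prevents the points $\mb{x}_k$ from escaping to infinity while their levels $c_k$ shrink to $0$, and the whole argument collapses without it. Everything else is a routine limit computation. I would also note that the regular-value hypothesis does not enter the USC argument directly; its role is to guarantee that $h^{-1}(0)$ is a nonempty, well-defined boundary manifold, so that $h^{-1}(0) \oplus B_{\epsilon_0}$ is a genuine $\epsilon_0$-neighborhood and the statement remains nonvacuous and consistent with the CBF setting.
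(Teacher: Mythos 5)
Your argument is correct and self-contained. The paper does not actually prove Proposition~\ref{prop:usc}; it imports it verbatim from \cite{konda2019characterizing}, so there is no in-paper proof to compare against. Your contradiction-plus-sequential-compactness argument is the standard route for this kind of statement and is essentially the mechanism behind the cited result: a violating sequence $\mb{x}_k$ with $h(\mb{x}_k)=c_k\to 0$ is trapped in the compact sublevel band $\Lambda$, its limit point lies in $h^{-1}(0)$ by continuity of $h$, and the $1$-Lipschitz distance function forces the incompatible bound $\mathrm{dist}(\mb{x}^\star,h^{-1}(0))\geq\epsilon_0$. Two small points are worth being explicit about: (i) the equivalence $\mb{x}\notin h^{-1}(0)\oplus B_{\epsilon_0}\iff \mathrm{dist}(\mb{x},h^{-1}(0))\geq\epsilon_0$ is clean for the open ball, and since $h^{-1}(0)$ is a closed subset of the compact set $\Lambda$ the infimum is attained, so nothing breaks with either ball convention; (ii) your observation that the regular-value hypothesis is not used is accurate --- compactness of $\Lambda$ alone drives upper semi-continuity, and regularity is needed in \cite{konda2019characterizing} for the complementary (lower/inner) semicontinuity direction and to make the level sets well-behaved as manifolds, which is how it is used later in Theorem~\ref{thm:main}.
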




This proposition relates the regularity of $h$ to the upper semi-continuity of its level sets. In essence, the regularity of $h$ at $0$ ensures that small changes in the value defining the level set has a small effect on the level set itself. 

\section{Main Result} \label{sec:main_result}
In this section we present our main result relating the supervised training of end-to-end controllers to intput-to-state safety. To render the following closed loop system safe: 
\begin{align}
    \dot{\mb{x}} & = \mb{f}(\mb{x}) + \mb{g}(\mb{x})\mb{k}_\theta(\mb{c}(\mb{x})) \label{eq:dyn_ete}.
\end{align}

For the behavioral cloning problem \eqref{eq:learning_opt_statement} we require an expert controller that provides robustness to matched disturbances and state uncertainty. 
 For this we choose the Tunable Robust Optimization Program \ref{eq:trop} controller \cite{cosner_safety-aware_2021}. 

\begin{definition}[Tunable Robust Optimization Program (TR-OP) Controller ] \label{eq:expert_controller}
\begin{align}
    \mb{k}_T({\mb{x}}) = \argmin_{\mb{v} \in \R^m } & \Vert \mb{v} - \mb{k}_\textrm{nom}({\mb{x}}) \Vert^2 \tag{TR-OP} \label{eq:trop}\\
    \textrm{s.t. } & L_\mb{f}h({\mb{x}}) + L_\mb{g}h({\mb{x}})\mb{v} \nonumber \noindent\\ 
    &  - \varphi \Vert L_\mb{g}h({\mb{x}})\Vert^2 - a -b \Vert \mb{v} \Vert \geq - \alpha ( h ({\mb{x}})). \nonumber
\end{align}
with parameters $\varphi , a, b \in \R_{\geq 0 }$ and $\alpha \in \mathcal{K}_\infty^e$.
\end{definition}
\noindent  $\varphi, a, $ and $b$ effect the robustness of the closed loop system with respect to matched disturbances and state uncertainty. 

\begin{definition}[CBF-Compliancy] \label{def:etek}
The learned controller $\mb{k}_\theta: \R^k \to \R^m $ is a \textit{CBF-Compliant}~ for some $h: \R^n \to \R$ with measurement function $\mb{c}: \R^n \to \R^k $ if 
\begin{align}
    \min_{\mb{x} \in \mathcal{D} } \Vert \mb{x}_1 - \mb{x}  \Vert  &\leq r_1,  \label{eq:thm_required_sampling}\\
    \Vert \mb{k}_T(\mb{x}_2) - \mb{k}_\theta(\mb{c}(\mb{x}_2))\Vert  &\leq M_e, \label{eq:error_bound} \\
    \Vert \mb{k}_\theta(\mb{c}(\mb{x}_3)) - \mb{k}_\theta(\mb{c}(\mb{x}_4)) \Vert   &\leq \mathfrak{L}_{\mb{k}_\theta\circ \mb{c}} \Vert \mb{x}_3 - \mb{x}_4 \Vert,  \quad   \label{eq:controller_lip_bound}
\end{align}
for all $\mb{x}_1 \in \partial \mathcal{C}$, $(\mb{x}_2, \mb{k}_T(\mb{x}_2)) \in \mathcal{D}$, and $\mb{x}_3, \mb{x}_4 \in \partial \mathcal{C} \oplus \overline{B}_{r_2}$ where $r_1, r_2\in \R_{>0} $, $ \mathfrak{L}_{\mb{k}_\theta}, M_e\in \R_{\geq 0 }$, and $\mb{k}_T: \R^n \to \R^m $ is a \ref{eq:trop} controller for the function $h: \R^n \to \R$ with parameters $\varphi, a,b\in \R_{\geq 0 }$ and $\alpha \in \mathcal{K}_\infty^e$. 
\end{definition}

\noindent Note that $\oplus $ indicates the Minkowski sum.

\begin{remark}
Intuitively, any trajectory $\mb{x}(t)$ that may leave an expanded safe set $\mathcal{C}_\delta\supset \mathcal{C}$ must pass through the boundary $\partial \mathcal{C}_\delta$. Forward invariance of $\mathcal{C}_\delta$ and upper semi-continuity of $h$ ensure that the trajectory remains sufficiently close to a point in $\partial \mathcal{C}$ for which we have expert data, thus preventing the cascading failure typical of behavioral cloning.
\end{remark}

Next, we use the USC of $h$ to relate the existence of a \etek~to the ISSf of system \eqref{eq:dyn_ete}. 
\begin{theorem}
Let $\mathcal{C}\subset \R^n $ be the 0-superlevel set of a some function $h:\R^n \to \R$ which satisfies the Proposition \ref{prop:usc}.

There exist $\underline{\varphi},\underline{a}, \underline{b}\in\R_{\geq 0}$ such that if
\begin{itemize}
    \item $\mb{k}_\theta: \R^k \to \R^m$ is a \etek~on $\mathcal{C}$ for parameters $\varphi\geq \underline{\varphi}, a\geq\underline{a},$ $b\geq\underline{b}$, $\alpha \in \mathcal{K}_\infty^e$ with constants $r_1, r_2 > 0$ 
    \item and  $L_\mb{f}h, L_\mb{g}h, \Vert L_\mb{g}h \Vert^2$, and $\alpha \circ h$ be Lipschitz continuous on $\partial\mathcal{C} \oplus \overline{B}_{r_2}$,
\end{itemize}
then the closed loop system \eqref{eq:dyn_ete} is ISSf with respect to $\mathcal{C}$ and safe with respect to
\begin{align}
    \mathcal{C}_\delta = \left\{ \mb{x} \in \R^n \bigg | h(\mb{x}) \geq \alpha^{-1} \left( \frac{-1}{2\varphi} (\mathfrak{L}_{\mb{k}_\theta \circ \mb{c}}r_3 + M_\mb{e})^2 \right)  \right\}. \label{eq:theorem_issf_set}
\end{align}
where $r_3 \triangleq r_2 + r_1 $.
 \label{thm:main}
\end{theorem}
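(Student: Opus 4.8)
The plan is to recognize the learned closed loop \eqref{eq:dyn_ete} as the expert closed loop subject to a bounded matched disturbance, as in \eqref{eq:dyn_disturbed}, and then to run the input-to-state-safety argument attached to the \eqref{eq:trop} controller. Concretely, I would first fix an arbitrary $\mb{x}\in\partial\C\oplus\overline{B}_{r_2}$ and select a witnessing sample: by the sampling hypothesis \eqref{eq:thm_required_sampling} every boundary point has a datum within $r_1$, so taking the nearest boundary point to $\mb{x}$ (distance $\le r_2$) and then its nearest sample $\mb{x}_2$ (distance $\le r_1$) yields $\norm{\mb{x}-\mb{x}_2}\le r_3=r_1+r_2$. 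Routing through the intermediate value $\mb{k}_\theta(\mb{c}(\mb{x}_2))$, the controller-Lipschitz bound \eqref{eq:controller_lip_bound} and the error bound \eqref{eq:error_bound} then deliver the central estimate $\norm{\mb{k}_\theta(\mb{c}(\mb{x}))-\mb{k}_T(\mb{x}_2)}\le \mathfrak{L}_{\mb{k}_\theta\circ\mb{c}}r_3+M_e=:\delta$. This step never needs $\mb{k}_T$ itself to be Lipschitz; the required regularity is instead imported through the CBF data.

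Next I would lower bound $\dot h$ along \eqref{eq:dyn_ete} at $\mb{x}$ by the feasibility of \eqref{eq:trop} at the sample. Since $\mb{k}_T(\mb{x}_2)$ satisfies the \eqref{eq:trop} constraint at $\mb{x}_2$, writing $L_\mb{f}h(\mb{x})+L_\mb{g}h(\mb{x})\mb{k}_\theta(\mb{c}(\mb{x}))$ and adding and subtracting the corresponding quantities at $\mb{x}_2$ expresses $\dot h$ as the satisfied constraint value at $\mb{x}_2$ plus difference terms. Each difference is controlled by the assumed Lipschitz continuity of $L_\mb{f}h$, $L_\mb{g}h$, $\norm{L_\mb{g}h}^2$ and $\alpha\circ h$ on $\partial\C\oplus\overline{B}_{r_2}$, producing residuals proportional to $r_3$ (some additionally proportional to $\norm{\mb{k}_T(\mb{x}_2)}$), while the control mismatch contributes the matched term $-\norm{L_\mb{g}h(\mb{x})}\delta$. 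The robustifying term $\varphi\norm{L_\mb{g}h(\mb{x})}^2$ is then combined with this matched term by Young's inequality (with parameter $\varphi$), giving the floor $-\tfrac{1}{2\varphi}\delta^2$, which is exactly the margin appearing in \eqref{eq:theorem_issf_set}.

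The slack parameters $a$ and $b$ pay for the leftover residuals: choosing $\underline a$ to dominate the constant $r_3$-residuals (those from $L_\mb{f}h$, $\alpha\circ h$, and the $\varphi\norm{L_\mb{g}h}^2$ Lipschitz gap) and $\underline b$ to dominate the $\norm{\mb{k}_T(\mb{x}_2)}$-proportional residual (from $L_\mb{g}h$) collapses the estimate to the clean ISSf inequality $\dot h(\mb{x})\ge-\alpha(h(\mb{x}))-\tfrac{1}{2\varphi}\delta^2$ for every $\mb{x}\in\partial\C\oplus\overline{B}_{r_2}$. Finally I would invoke the upper semi-continuity of Proposition \ref{prop:usc}: with $\epsilon=r_2$ it yields an $\eta$ so that $h^{-1}(c)\subset\partial\C\oplus B_{r_2}$ whenever $|c|\le\eta$, and choosing $\underline\varphi$ large enough that the margin $\tfrac{1}{2\varphi}\delta^2$ (equivalently $\gamma(\delta)=-\alpha^{-1}(-\tfrac{1}{2\varphi}\delta^2)$) stays below $\eta$ places the entire boundary $\partial\C_\delta$ inside the region where the inequality holds. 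On $\partial\C_\delta$ one has $\alpha(h(\mb{x}))=-\tfrac{1}{2\varphi}\delta^2$, so the inequality reads $\dot h\ge 0$, and a Nagumo/comparison-lemma argument certifies forward invariance of $\mathcal{C}_\delta$, i.e. ISSf with respect to $\mathcal{C}$ and safety with respect to \eqref{eq:theorem_issf_set}.

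The main obstacle I anticipate is the coupled, nearly circular, selection of $\underline\varphi,\underline a,\underline b$: the margin (hence the size of $\mathcal{C}_\delta$ and, through USC, whether $\partial\C_\delta$ even lands in the region where the estimates are valid) shrinks as $\varphi$ grows, forcing $\varphi$ large; yet one of the residuals that $a$ must absorb scales like $\varphi\,\mathfrak{L}_{\norm{L_\mb{g}h}^2}r_3$, so $\underline a$ itself grows with $\varphi$. One must verify that this system of inequalities is simultaneously satisfiable and that every intermediate point, in particular each witness $\mb{x}_2$, genuinely remains inside $\partial\C\oplus\overline{B}_{r_2}$ where the Lipschitz hypotheses are posed, e.g. by requiring $r_1\le r_2$.
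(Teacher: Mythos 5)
Your proposal is correct and follows essentially the same route as the paper's proof: pick a witness sample within $r_3$ of any point in $\partial\mathcal{C}\oplus \overline{B}_{r_2}$, expand $\dot h$ around the \eqref{eq:trop} feasibility inequality at that sample, absorb the Lipschitz residuals with $a$ and $b$, bound the controller mismatch by $\mathfrak{L}_{\mb{k}_\theta\circ\mb{c}}r_3+M_e$, complete the square against $\varphi\Vert L_\mb{g}h\Vert^2$, and use upper semi-continuity plus Nagumo to place $\partial\mathcal{C}_\delta$ inside the tube where the bound holds. The two caveats you flag (needing the witness sample to lie in $\partial\mathcal{C}\oplus\overline{B}_{r_2}$, e.g.\ via $r_1\le r_2$, and the dependence of $\underline{a}$ on $\varphi$ through $\mathfrak{L}_{\varphi\Vert L_\mb{g}h\Vert^2}$) are genuine but non-fatal, and are in fact left implicit in the paper's own argument; they resolve by choosing $\underline{\varphi}$ first and then setting $\underline{a}$ for the chosen $\varphi$.
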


\begin{figure*}[t]
    \centering
    \includegraphics[width=\linewidth]{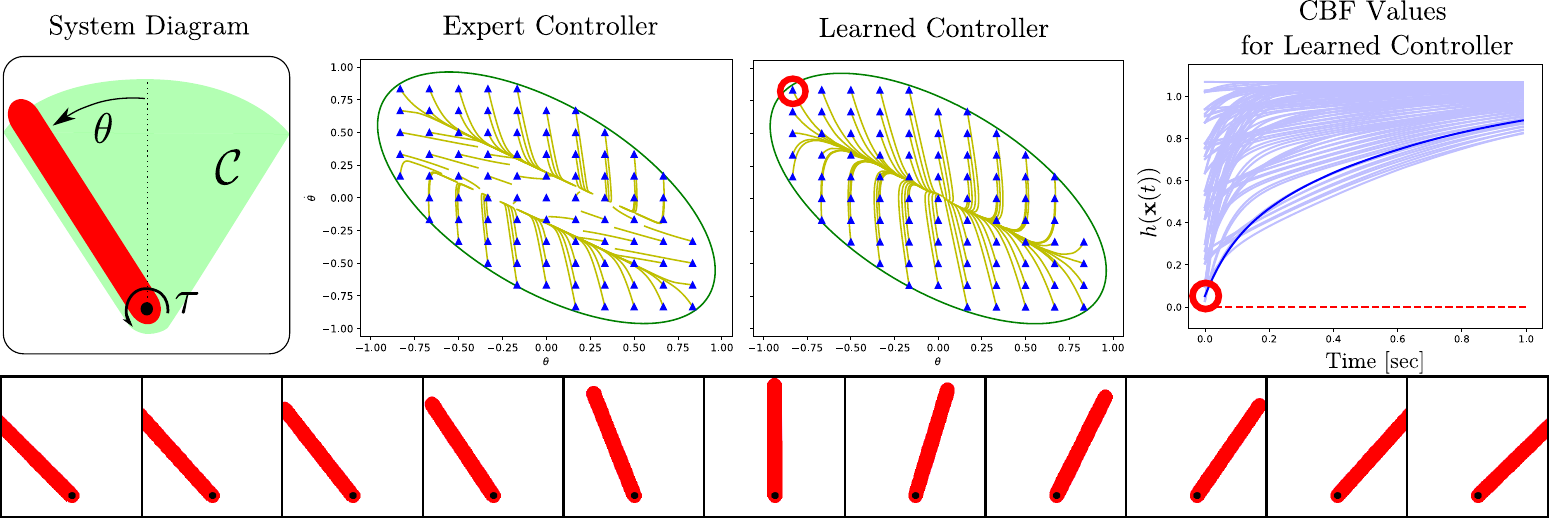}
    \caption{Results for the inverted pendulum. 
    \textbf{Left: } A diagram of the system where the green represents the safe set $\mathcal{C}$. 
    \textbf{Center Left:} One second long trajectories generated by the expert controller $\mb{k}_T$ are shown in yellow and plotted for several initial conditions represented by blue triangles. The green ellipse marks the boundary $\partial \mathcal{C}$. 
    \textbf{Center Right: } 
    One second long trajectories generated by the learned controller $\mb{k}_\theta$ are shown in yellow and plotted for several initial conditions represented by blue triangles. in The green ellipse marks the boundary $\partial \mathcal{C}$. 
    \textbf{Right: }
    CBF  values $h (\mb{x}(t))$ achieved by the learned controller. Note all are greater than zero indicating safety of the system. The darker blue trajectory begins at the initial condition marked by the red circle in the plot of the learned controller. \textbf{Bottom: } Images spanning the safe set $\mathcal{C}$ that are used by $\mb{k}_\theta$ for end-to-end control of the system. 
   }
    \label{fig:ip_results}
    \vspace{-0.5cm}
\end{figure*}
\begin{proof}
Consider some state $\mb{x}_3 \in \partial \mathcal{C} \oplus B_{r_2}$. 
Given $\mb{x}_3$ there must exist some $\mb{x}_2 \in \partial \mathcal{C}$ such that $\Vert\mb{x}_2 - \mb{x}_3  \Vert\leq r_2$. Additionally, since $\mb{k}_\theta$ is a \etek~there must be some $\mb{x}_1 \in \mathcal{D}$  such that $\Vert \mb{x}_1 - \mb{x}_2 \Vert \leq r_1$ and so $\Vert \mb{x}_1 - \mb{x}_3\Vert \leq r_3$ by the triangle inequality.

The function $h$ satisfies Proposition \ref{prop:usc} by assumption so by the definition of upper semi-continuity it follows that:  
\begin{align}
    \exists \eta>0 \textrm{ s.t. }  c \in B_\eta \implies h^{-1}(c) \subset h^{-1}(0) \oplus B_{r_2} . 
\end{align} 

Thus we can choose $\underline{\varphi} >0 $ large enough such that for any  $\varphi \geq \underline{\varphi}$ we have that  $\partial \mathcal{C}_\delta\subset \partial \mathcal{C} \oplus B_{r_2}$ for $\mathcal{C}_\delta $ as in \eqref{eq:theorem_issf_set}.  Next we choose the remaining parameter bounds to be: 
\begin{align}
    a\geq \underline{a} & =  r_3(\mathfrak{L}_{L_\mb{f}h} + \mathfrak{L}_{\alpha\circ h } + \mathfrak{L}_{{\varphi}\Vert L_\mb{g}h \Vert^2}), \\
    b \geq \underline{b} & =  r_3  \mathfrak{L}_{L_\mb{g}h}.
\end{align}
where $\mathfrak{L}$ represents the Lipschitz constant of the subscripted function on $\partial \mathcal{C}\oplus B_{r_2}$. 

Using $\mb{k}_\theta$ we can bound the time derivative of the CBF at $\mb{x}_3\in \mathcal{C}\oplus B_{r_2} $ as:
\begin{align}
& \frac{d}{dt} h (\mb{x}_3, \mb{k}_\theta (\mb{c}(\mb{x}_3) ))  \\
& = L_\mb{g}h(\mb{x}_1)\mb{k}_T(\mb{x}_1)  + \frac{d}{dt}h(\mb{x}_3, \mb{k}_\theta(\mb{c}(\mb{x}_3)) ) - L_\mb{g}h(\mb{x}_1)\mb{k}_T(\mb{x}_1) \nonumber \\
& \geq L_\mb{f}h(\mb{x}_3) - L_\mb{f}h(\mb{x}_1) +  r_3 \mathfrak{L}_{L_\mb{f}h} \label{eq:pf_lipschitz_expansion} \\
& \quad + \alpha(h(\mb{x}_3)) - \alpha(h(\mb{x}_1)) +  r_3\mathfrak{L}_{\alpha \circ h } \nonumber\\
& \quad + \varphi\Vert L_\mb{g}h(\mb{x}_1) \Vert^2 - \varphi \Vert L_\mb{g}h(\mb{x}_3) \Vert^2 +  r_3 \mathfrak{L}_{\varphi \Vert L_\mb{g}h \Vert^2}\nonumber \\
& \quad + L_\mb{g}h(\mb{x}_3)\mb{k}_T(\mb{x}_1) - L_\mb{g}h(\mb{x}_1) \mb{k}_T(\mb{x}_1) +  r_3  \mathfrak{L}_{L_\mb{g}h} \Vert \mb{k}_T(\mb{x}_1) \Vert \nonumber\\
& \quad  - \alpha(h(\mb{x}_3))+ \varphi \Vert L_\mb{g}h(\mb{x}_3)\Vert^2 \nonumber \\
& \quad + L_\mb{g}h(\mb{x}_3)(\mb{k}_\theta(\mb{c}(\mb{x}_3)) - \mb{k}_T(\mb{x}_1)) \nonumber
\end{align}

We can now lower bound the first four lines of  \eqref{eq:pf_lipschitz_expansion} using Lipschitz constants. For example,
\begin{align}
L_\mb{f}h(\mb{x}_3) &- L_\mb{f}h(\mb{x}_1) + r_3 \mathfrak{L}_{L_{\mb{f}}h} \\
& \geq - \Vert L_\mb{f}h(\mb{x}_3) - L_\mb{f}h(\mb{x}_1) \Vert  + r_3 \mathfrak{L}_{L_{\mb{f}}h}\\
& \geq  \mathfrak{L}_{L_\mb{f}h}(r_3 - \Vert \mb{x}_1 - \mb{x}_2 + \mb{x}_2 - \mb{x}_3\Vert ) \\
& \geq \mathfrak{L}_{L_\mb{f}h}(r_3 - \Vert \mb{x}_1 - \mb{x}_2 \Vert - \Vert \mb{x}_2 - \mb{x}_3\Vert ) \geq 0 
\end{align}
since $\Vert \mb{x}_1 - \mb{x}_2  \Vert \leq r_1$ and $\Vert \mb{x}_2 - \mb{x}_3 \Vert\leq r_2$.

Applying these Lipschitz-based bounds for $L_\mb{f}h, L_\mb{g}h, \alpha \circ h, $ and $\varphi \Vert L_\mb{g}h\Vert^2  $ yields: 
\begin{align}
    \frac{d}{dt}h(\mb{x}_3, \mb{k}_\theta(\mb{c}(\mb{x}_3))) \geq  & -   \alpha(h(\mb{x}_3)) + \varphi \Vert L_\mb{g}h(\mb{x}_3)\Vert^2 \label{eq:pf_lipschitz_bound}\\
    &  + L_\mb{g}h(\mb{x}_3)(\mb{k}_\theta(\mb{c}(\mb{x}_3)) - \mb{k}_T(\mb{x}_1)) \nonumber
\end{align}
Additionally we can lower bound the final term using properties \eqref{eq:error_bound} and \eqref{eq:controller_lip_bound} of \etek ~as:
\begin{align}
     L_\mb{g}h(&\mb{x}_3)(\mb{k}_\theta(\mb{c}(\mb{x}_3)) - \mb{k}_T(\mb{x}_1)) \nonumber\\
     \geq  L_\mb{g}&h(\mb{x}_3)(\mb{k}_\theta(\mb{c}(\mb{x}_3)) - \mb{k}_\theta(\mb{c}(\mb{x}_1)) + \mb{k}_\theta(\mb{c}(\mb{x}_1)) - \mb{k}_T(\mb{x}_1)) \\
     \geq - \Vert &  L_\mb{g}h(\mb{x}_3) \Vert \bigg( \Vert \mb{k}_\theta(\mb{c}(\mb{x}_3)) - \mb{k}_\theta(\mb{c}(\mb{x}_1))\Vert \nonumber\\
     & \quad \quad \quad \quad \quad \quad \quad \quad + \Vert \mb{k}_\theta(\mb{c}(\mb{x}_1)) - \mb{k}_T(\mb{x}_1)\Vert \bigg)\\
     \geq -\Vert & L_\mb{g}h(\mb{x}_3) \Vert (\mathfrak{L}_{\mb{k}_\theta\circ \mb{c}} r_3 + M_e  ). \label{eq:pf_learning_bound}
\end{align}

Using \eqref{eq:pf_learning_bound} to lower-bound \eqref{eq:pf_lipschitz_bound} results in: 
\begin{align}
    \frac{d}{dt}&h(\mb{x}_3,  \mb{k}_\theta(\mb{c}(\mb{x}_3)))\\
    &\geq  -   \alpha (h(\mb{x}_3))  + \varphi\Vert L_\mb{g}h(\mb{x}_3) \Vert^2  \nonumber \\
    & \quad \quad \quad \quad \quad \quad \quad \quad - \Vert L_\mb{g}h(\mb{x}_3) \Vert( \mathfrak{L}_{\mb{k}_\theta\circ \mb{c}} r_3 + M_e),  \\
    &\geq -   \alpha(h(\mb{x}_3)) - \frac{1}{2\varphi}(\mathfrak{L}_{\mb{k}_\theta\circ \mb{c}}r_3 + M_e)^2, \label{eq:pf_final_bound}
\end{align}
where the final bound is achieved by completing the square and removing the positive term. 

To achieve forward invariance of $\mathcal{C}_\delta$ we note that $h(\mb{x}_3) = \alpha^{-1}\left(-\frac{1}{2\varphi}(\mathfrak{L}_{\mb{k}_\theta}r_3 + M_e)\right) \implies \frac{d}{dt}h(\mb{x}_3, \mb{k}_\theta(\mb{x}_3))\geq 0 $. Since the bound \eqref{eq:pf_final_bound} holds for all $\mb{x}_3 \in \partial  \mathcal{C} \oplus B_{r_2}$ and $\underline{\varphi}$ was chosen such that $\partial \mathcal{C}_\delta \subset \partial  \mathcal{C} \oplus B_{r_2} $ it is true that $\frac{d}{dt}h(\mb{x}_4, \mb{k}_\theta(\mb{x}_4)) = 0 $ for all $\mb{x}_4 \in \partial \mathcal{C}_\delta $. Thus by Nagumo's theorem \cite{nagumo1942lage} the set $\mathcal{C}_\delta $ is forward invariant and $\mathcal{C}$ is ISSf.
\end{proof}

To the best of our knowledge, this is the first result to establish a direct relationship between the safety of a system and the parameters of the imitation learning problem used to develop its controller. We recognize that finding and using the exact Lipschitz constants may be impractical, but note that due to their conservatism the CBF-compliant controller may be capable of achieving safety with far smaller values as demonstrated in simulation in Section \ref{sec:sim}.  


The learned controller $\mb{k}_\theta$ developed in Theorem \ref{thm:main} has mathematical guarantees of safety, but may result in behaviors significantly different than the expert controller in the interior of the safe set $\textrm{Int}(\mathcal{C})$ when the system is far from the boundary $\partial \mathcal{C}$. Therefore we present a corollary which may result in improved behavioral cloning on the interior of $\mathcal{C}$ due to increased sampling. The safety guarantees of this corollary follow immediately from Theorem \ref{thm:main}. 

\begin{corollary}
Let the dataset $\mathcal{D}$ satisfy the inequality 
\begin{align}
    \min_{\mb{x} \in \mathcal{D} } \Vert \mb{x}_1 - \mb{x}  \Vert  \leq r_1 , &\quad\quad\forall \mb{x}_1 \in  \mathcal{C} \label{eq:corollary_extra_sampling}
\end{align}
in place of \eqref{eq:thm_required_sampling} for some $r_1 > 0 $ . Let the remaining assumptions of Theorem \ref{thm:main} hold,
then the closed loop system \eqref{eq:dyn_ete} is ISSf with respect to $\mathcal{C}$ and safe with respect to $\mathcal{C}_\delta $ \eqref{eq:issf_safe_set}. 
\end{corollary}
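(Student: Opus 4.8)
The plan is to observe that the corollary merely strengthens one hypothesis of Theorem~\ref{thm:main} and then to invoke that theorem essentially verbatim. The key observation is that the sampling requirement \eqref{eq:corollary_extra_sampling} demands coverage of the entire safe set $\mathcal{C}$, whereas the first condition of CBF-compliancy, \eqref{eq:thm_required_sampling}, requires only coverage of the boundary $\partial\mathcal{C}$. Since $\partial\mathcal{C}\subseteq\mathcal{C}$, any dataset $\mathcal{D}$ satisfying \eqref{eq:corollary_extra_sampling} automatically satisfies \eqref{eq:thm_required_sampling} with the same constant $r_1$. Thus the corollary's hypothesis is strictly stronger than what Theorem~\ref{thm:main} needs.

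First I would re-establish that $\mb{k}_\theta$ is a CBF-compliant controller in the sense of Definition~\ref{def:etek}. The second and third defining inequalities, \eqref{eq:error_bound} and \eqref{eq:controller_lip_bound}, are unchanged and are carried over by the clause ``let the remaining assumptions of Theorem~\ref{thm:main} hold,'' while the first inequality follows from the set inclusion $\partial\mathcal{C}\subseteq\mathcal{C}$ noted above. Hence all three conditions of Definition~\ref{def:etek} are satisfied, so $\mb{k}_\theta$ is CBF-compliant on $\mathcal{C}$ for the same parameters $\varphi,a,b,\alpha$ and constants $r_1,r_2$. With CBF-compliancy in hand, and with the remaining structural assumptions---Proposition~\ref{prop:usc} for $h$ together with Lipschitz continuity of $L_\mb{f}h$, $L_\mb{g}h$, $\Vert L_\mb{g}h\Vert^2$, and $\alpha\circ h$ on $\partial\mathcal{C}\oplus\overline{B}_{r_2}$---already assumed, I would apply Theorem~\ref{thm:main} directly. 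This yields that \eqref{eq:dyn_ete} is ISSf with respect to $\mathcal{C}$ and safe with respect to the set \eqref{eq:theorem_issf_set}, which is of the generic ISSf form \eqref{eq:issf_safe_set} upon choosing $\gamma\in\mathcal{K}_\infty$ and $\delta$ so that $-\gamma(\delta)=\alpha^{-1}\!\big(-\tfrac{1}{2\varphi}(\mathfrak{L}_{\mb{k}_\theta\circ\mb{c}}r_3+M_e)^2\big)$, giving exactly the claimed conclusion.

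There is essentially no analytical obstacle here: the argument is a monotonicity-in-hypotheses reduction rather than a fresh proof, which is why the guarantee ``follows immediately'' from Theorem~\ref{thm:main}. The one point I would make explicit is that enlarging $\mathcal{D}$ with additional samples on $\interior(\mathcal{C})$ can only help the boundary-coverage condition \eqref{eq:thm_required_sampling} and leaves the error and Lipschitz bounds \eqref{eq:error_bound}--\eqref{eq:controller_lip_bound} intact, so the safety certificate is preserved unchanged. The practical payoff, as the surrounding discussion indicates, is improved behavioral cloning in the interior of $\mathcal{C}$ at no cost to the formal guarantee.
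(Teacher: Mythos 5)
Your proposal is correct and follows essentially the same route as the paper's own proof: both reduce the corollary to Theorem~\ref{thm:main} by noting that $\partial\mathcal{C}\subseteq\mathcal{C}$ (since $\mathcal{C}$ is closed), so the full-set sampling condition \eqref{eq:corollary_extra_sampling} implies the boundary-sampling condition \eqref{eq:thm_required_sampling}. Your added remarks on identifying the resulting safe set with the generic ISSf form \eqref{eq:issf_safe_set} are a harmless elaboration of what the paper leaves implicit.
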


\begin{proof}
$\mathcal{C}$ is a closed set so $\partial \mathcal{C} \subseteq \mathcal{C}$, thus \eqref{eq:corollary_extra_sampling} $\implies $ \eqref{eq:thm_required_sampling} and the conditions of Theorem \ref{thm:main} are met. 
\end{proof}

\begin{figure*}[t]
    \centering
    \includegraphics[width=\linewidth]{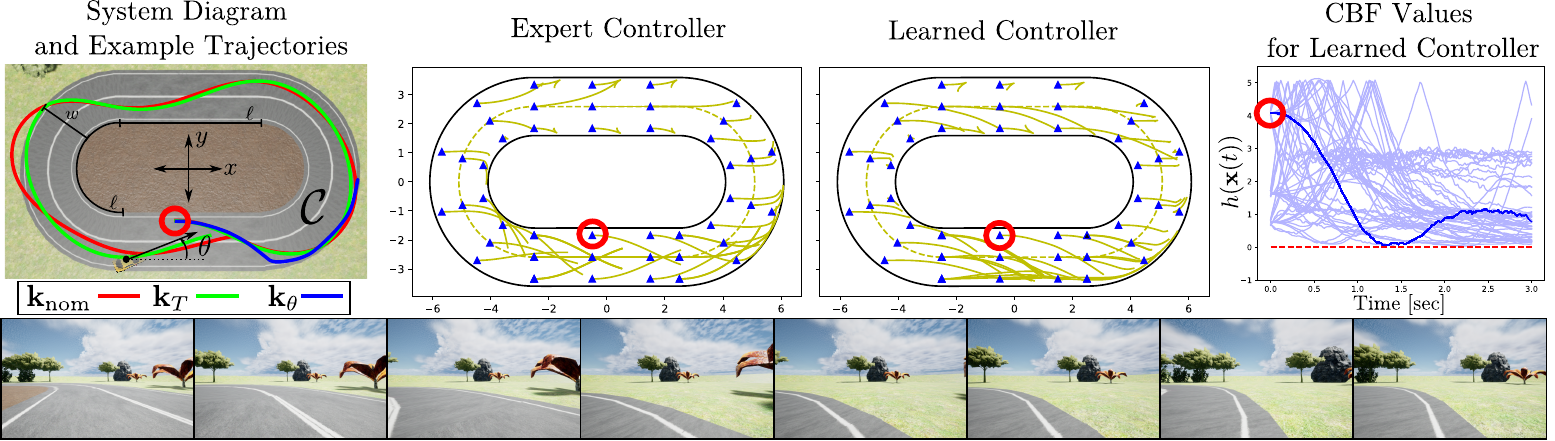}
    \caption{Results for the car. \textbf{Left: } Three 15 second long trajectories are shown starting from the same initial condition, the nominal controller generates the unsafe red  trajectory, the \ref{eq:trop} controller generates the safe green trajectory, and the learned controller generates the safe blue trajectory. \textbf{Center Left: } 3 second long trajectories starting at several initial conditions are shown for the expert controller $\mb{k}_T$. The blue triangles represent initial conditions and are all within the safe set. The yellow lines represent the trajectories of the car beginning at a blue triangle. \textbf{Center Right: } 3 second long trajectories for starting at several initial conditions shown for the learned controller $\mb{k}_\theta$. \textbf{Right: } CBF values $\min \{h_1(\mb{x}(t)), h_2(\mb{x}(t))\}$ achieved by the learned controller. Note all are greater than zero. The darker blue trajectory begins at the initial condition indicated by the red circles in each plot. \textbf{Bottom: } Images used by $\mb{k}_\theta $ for end-to-end control. From left to right, the first-person view of the trajectory indicated using the red circles starting at time $t= 0$ seconds and increasing by $0.25$ seconds. }
    \label{fig:car_results}
    \vspace{-0.5cm}
\end{figure*}

\section{Simulation Experiments} \label{sec:sim}

In this section we discuss the simulation results for safe vision-based end-to-end control of an inverted pendulum and a simplified car using \etek. 
In both cases the convolutional neural network used for end-to-end learning was MobileNetV2 \cite{sandler2018mobilenetv2} with an additional full-connected layer added to generate control inputs of the proper dimension. The full network has approximately 3.4 million parameters. Training was performed using the ADAM optimizer, an $\ell_2$ loss with $\ell_2$ weight decay, and batched training. The frequency of the observations was chosen to be $100 $ Hz and $60$ Hz for the pendulum and car respectively. The simulations were conducted using zero-order-hold control inputs of the same frequency with no latency. Additional information such as values of constants and image formatting can be found in our code \cite{codeCode}.

\subsection{Inverted Pendulum}


We first consider an inverted pendulum system with the states $\mb{x} = \lmat \theta & \dot{\theta}\rmat^\intercal $ with torque inputs $\tau \in \R$ as shown in Fig. \ref{fig:ip_results}. The dynamics and observation function of this system are given as: 
\begin{align}
    \dot{\mb{x}} & = \lmat \dot{\theta} \\ \sin\theta  \rmat + \lmat 0 \\ \tau \rmat,  &     \mb{y} = \mb{c}(\mb{x}) = \lmat \textrm{Img}(\mb{x})\\ \dot\theta \rmat
\end{align}
where $\textrm{Img}(\mb{x})$ represents the image of the system at state $\mb{x}$ as seen from a camera facing the inverted pendulum. A example images can be found in Figure \ref{fig:ip_results}. The learned controller is a function of the current image and velocity of the system, so an additional fully connected layer was added to incorporate the velocity into the end-to-end controller. 

The safe set for the inverted pendulum is chosen to be: 
\begin{align}
    h(\mb{x}) & = c - \mb{x}^\intercal \mb{P} \mb{x}, && \mathcal{C} = \{ \mb{x} \in \R^n ~|~ h(\mb{x}) \geq 0 \} 
\end{align}
where $\mb{P} \in \R^{2 \times 2}$ is such that $\mb{x}^\intercal\mb{P}\mb{x}$ is a control Lyapunov function derived from the continuous time algebraic Ricatti equation using feedback linearization and $c$ is chosen such that $\max_{\theta \in \mathcal{C}} |\theta |  = \pi/4$. This safe set is visualized in Fig. \ref{fig:ip_results}.

The expert controller is the \ref{eq:trop} controller with parameters $\varphi = 2 $, $\alpha (c)=  c$, and $a$ and $b$ chosen as the Lipschitz constants of $(L_\mb{f}h(\mb{x}) + \alpha ( h(\mb{x})) + \varphi \Vert L_\mb{g}h(\mb{x}) \Vert^2))$ and $L_\mb{g}h(\mb{x}) $ respectively over the compact set $\partial \mathcal{C}$ multiplied by the minimum sampling distance $r_1 = 0.01$. The nominal controller is $\mb{k}_\mathrm{nom}(\mb{x}) = -0.75 \theta$ which provides some torque to counteract gravity, but fails to stabilize the pendulum. The boundary of the safe set, $\partial \mathcal{C}$, is gridded and sampled uniformly with a minimum distance $r_1$ to create the training dataset $\mathcal{D}$.

\vspace{-0.15cm}

\subsection{Race Car}

Next we consider a simplified car given by the unicycle dynamics and observation function: 
\begin{align}
    \dot{\mb{x}}  = \lmat \cos \theta & 0 \\ \sin \theta & 0 \\ 0 & 1  \rmat \lmat v \\ \omega \rmat,  &&     \mb{y} = \mb{c}(\mb{x}) = \textrm{Img}(\mb{x})
\end{align}
where the state $\mb{x} = \lmat x & y & \theta \rmat^\intercal  $ is the planar position and heading angle and the input $\mb{u} = \lmat v & \omega \rmat^\intercal$ is the forward and angular velocities and  $\textrm{Img}(\mb{x}) $ represents the driver's first-person-view from the car at position $\mb{x}$. A series of example first-person-view images can be seen in Figure \ref{fig:car_results}. 

The safe set for the car is chosen to be the 0-superlevel set of the function $\min \{ h_1 , h_2 \} $ where:
\begin{align}
    h_i(\mb{x}) = 
    \delta \widehat{\mb{n}}^\intercal \widehat{\mb{d}}_i + \psi_i\cdot  \begin{cases}
    \rho_i ^2  - \left( \left(x - \frac{\ell}{2} \right)^2 + y^2 \right) , &  x\geq \frac{\ell}{2}\\
    \rho_i^2 - \left(\left(x + \frac{\ell}{2} \right)^2 + y^2 \right) , &  x\leq \frac{-\ell}{2} \\
   \rho_i ^2  - y^2, &  \mathrm{else} \nonumber 
    \end{cases} 
\end{align}
where $\rho_1 = (\ell/\pi + w) $, $\psi_1 = 1$, $\rho_2 = \ell/ \pi $,  and $\psi_2 = -1$. 
Additionally, $\delta = 0.1$, $\widehat{\mb{n}}$ is the unit vector in the car's heading direction, $\widehat{\mb{d}}_1 $ is the unit vector pointing perpendicularly inward from the outer boundary of the track through the car's position, $\widehat{\mb{d}}_2$ is the unit vector point perpendicularly outward from the inner boundary of the track through the car's position, $\ell$ is the length of the straight portions of the track, $w$ is the width of the track. An annotated diagram of the track can be found in Figure \ref{fig:car_results}. Given these functions, the safe set $\mathcal{C } = \{ \mb{x} \in \R^3 ~|~  \min \{ h_1(\mb{x}), h_2(\mb{x}) \}  \geq 0 \} $ is a subset of the track with an angle dependence where positions with heading angles pointed towards the center line are considered safer. 

The expert controller is the \ref{eq:trop} controller with the constraint simultaneously enforced for both $h_1$ and $h_2$ with parameters $\varphi = 0.5$, $a = 10^{-2}$, $b = 10^{-4}$, and  $\alpha(c) = 10 c $. $\partial \mathcal{C}$ was gridded and sampled uniformly with distance of $r_1=0.1$ to generate $\mathcal{D}$. 
We use Theorem \ref{thm:main} to guide the choice of these constants, but due to the difficulty of estimating the Lipschitz constants and the likely over-conservatism of the resulting controller we choose parameters which are likely much smaller than those required to sufficiently guarantee safety mathematically but we nonetheless succeed in demonstrating safety experimentally. 

The nominal controller used in the \ref{eq:trop} controller is: 
\begin{align}
    \mb{k}_\textrm{nom} & = \lmat K_p | r- r_\textrm{mid}| + F\\ K_r(r- r_\textrm{mid}) + K_\textrm{dir}(\widehat{\mb{n}}^\intercal \widehat{\mb{e}}_\textrm{mid} ) \rmat  \label{eq:k_nom}
\end{align}
where $K_p, F, K_r, K_\textrm{dit} \in \R_{>0}$, $r = \Vert [x, \; y ] \Vert $, $r_\textrm{mid} $ is the distance from the origin to the middle the track along a line passing through the car, $\widehat{\mb{e}}_\textrm{mid} $ is the unit vector from the vehicle to middle line of the track. This nominal controller is capable of circumnavigating the track, but is unsafe. 

\vspace{-0.1cm}

\subsection{Learning and Results}

For both the inverted pendulum and the car, the learned controller $\mb{k}_\theta$ is trained until convergence to minimize \eqref{eq:learning_opt_statement} where the \eqref{eq:trop} controller is the expert controller.

The safe set for both systems was gridded with initial conditions and simulated forward for 1 second for the inverted pendulum and 3 seconds for the car. For the car, $\theta = 0 $ was held constant for each initial condition and the interior of the track was sampled. The trajectories can be seen in Figures \ref{fig:ip_results} and \ref{fig:car_results}. For each trajectory the \ref{eq:trop} controller renders the system safe and this safety is transferred to the learned controller despite having different closed-loop behavior. The minimum of $h$ achieved for the inverted pendulum example was $~0.028$ and the smallest value of $\min \{ h_1(\mb{x}), h_2(\mb{x}) \} $ achieved by the car for all initial conditions was $~0.030$, indicating safety of both systems.

Even though the system deviates significantly from the expert trajectories, the learned controller successfully keeps the system inside of the safe set. Thus, although additional sampling can be performed to improve the learned behavior, sampling on the boundary of the safe set is sufficient to render it forward invariant.  



\bibliographystyle{IEEEtran}
\bibliography{cosner}

\end{document}